\documentclass{article}

\usepackage[margin=1in]{geometry}
\usepackage{natbib}

\usepackage[utf8]{inputenc}
\usepackage[T1]{fontenc}
\usepackage{hyperref}
\usepackage{url}
\usepackage{booktabs}
\usepackage{amsmath,amssymb,amsthm}
\usepackage{graphicx}
\usepackage{microtype}
\usepackage{xcolor}
\usepackage{algorithm}
\usepackage{algpseudocode}
\usepackage{multirow}
\usepackage{enumitem}

\newtheorem{proposition}{Proposition}
\newtheorem{corollary}{Corollary}

\newtheorem{definition}{Definition}

\title{When Is Availability-Aware Training Worth It?\\
A Benchmark and Empirical Study of Interruption-Resilient\\
Optimization Under Predictable Compute Schedules}

\author{%
  Subhadip Mitra \\
  RotaStellar \\
  \texttt{subhadip@rotastellar.com} \\
}

\begin{document}

\maketitle

\begin{abstract}
Training under non-stationary but \emph{predictable} compute availability (orbital
satellites under eclipse, duty-cycled edge devices, power-capped datacenters) is often
framed as a problem demanding specialized, availability-aware optimizers. We test that
premise empirically. We release \emph{OrbitTrace}, a benchmark of 50
physics-grounded availability traces derived by SGP4 propagation of live two-line element
sets across three orbital regimes, and use it to ask a falsifiable question:
when an availability gap interrupts training, is a specialized resumption strategy worth
it, or is competent checkpoint-and-resume enough? Our central finding is that \emph{when
optimizer state can be preserved across a gap, the gap is essentially free.} A strong
checkpoint baseline that restores full optimizer state and indexes its learning-rate
schedule in effective (active) time matches uninterrupted training to within
data-ordering noise on CIFAR-10/ResNet-18, and \emph{exactly} on a GPT-2/AdamW
language-modeling task. Advantages previously reported for availability-aware
methods, including our own three-pillar method, AAT (momentum rescale, effective-time
learning rate, batch-norm warm restart), arise almost entirely from comparison against a
weak baseline that indexes its schedule on wall-clock time. Against the strong baseline,
reactive availability-aware adaptation provides no advantage across stationary,
optimizer-state-loss, and distribution-drift regimes. We do isolate one narrow regime
where it helps: for large models whose optimizer state \emph{cannot} be persisted across
gaps and that are interrupted by \emph{frequent, short} pauses, reconstructing a decayed
optimizer moment recovers only $\approx$21\% of the penalty of discarding state on average (and not
robustly across seeds). This
benefit vanishes for eclipse-scale gaps, where the decayed moment is indistinguishable
from zero. We map this decision boundary explicitly. Our contributions are a benchmark, a
strong reproducible baseline protocol, and a clear characterization of \emph{when}
interruption-resilient optimization is worth its complexity, and when it is not. We report a
negative result as the headline because, for this regime, it is the actionable one.
\end{abstract}

\section{Introduction}
\label{sec:intro}

A growing class of deployments trains models on compute whose availability is
non-stationary, highly structured, and, unusually, \emph{known in advance}. Satellites
in low Earth orbit lose solar power during eclipse on a schedule derivable to minutes from
published orbital elements~\citep{vallado2006revisiting}. Battery-powered edge nodes,
solar microgrids, and power-capped datacenters follow duty cycles predictable hours to
days ahead. In each case the training loop has access to an availability function
$A(t)\in\{0,1\}$ over the whole training horizon.

It is tempting to treat this foreknowledge as an opportunity for a specialized optimizer:
a method that anticipates and repairs the damage an interruption does to momentum,
learning-rate schedules, and normalization statistics. An earlier draft of this work took
exactly that position and reported that such a method recovered 60--82\% of the accuracy
gap between checkpoint-resume and uninterrupted training. We could not reproduce that claim
under scrutiny. This paper is the result of taking the claim apart.

The central question we ask is deliberately falsifiable:

\begin{quote}
\emph{When training is interrupted by a predictable availability gap, does a specialized
availability-aware resumption strategy beat a competent checkpoint-and-resume baseline, and
if so, in which regimes?}
\end{quote}

Our answer is mostly negative. The crux is that
the strength of the baseline decides the result. A baseline that preserves full optimizer
state across a gap and advances its learning-rate schedule in \emph{effective} (active)
time, not wall-clock time, reproduces uninterrupted training almost perfectly. We make
this precise in Section~\ref{sec:analysis}: under full-state preservation an availability
gap injects no bias and no excess loss; it is a no-op up to minibatch-ordering noise.
Reported wins for availability-aware methods are, on inspection, wins over a \emph{weak}
baseline that indexes the schedule on wall-clock time and over-anneals its learning rate
during idle intervals. That weak baseline is a strawman; a competent practitioner does not
build it.

This does not make the question empty. A gap \emph{can} be costly, in two specific ways,
and we test both: (A) when optimizer state cannot be persisted across the gap (a realistic
constraint when checkpoints compete for scarce power, storage, or downlink, and the
optimizer state is $1$--$2\times$ model size), and (B) when the data distribution drifts
across the gap so that preserved state is stale. Only in a narrow corner of regime (A) does
an availability-aware reconstruction beat the trivial alternative (a large model with
non-persistable optimizer state, interrupted by \emph{frequent, short} gaps), and even
there the benefit collapses for eclipse-scale gaps.

\paragraph{Contributions.}
\begin{enumerate}[leftmargin=*, nosep]
  \item \textbf{OrbitTrace}, a public benchmark of 50 physics-grounded availability traces
  (Section~\ref{sec:benchmark}), generated deterministically by SGP4 propagation of live
  TLEs across three orbital regimes and five workload patterns, for reproducible evaluation
  of training under predictable interruption.
  \item \textbf{A strong baseline and evaluation protocol} (Sections~\ref{sec:methods},
  \ref{sec:analysis}). We define \textsc{Checkpoint-Strong} (full-state preservation with
  effective-time learning-rate indexing) and prove it is equivalent to continuous training
  up to data-ordering noise. We argue this, not wall-clock checkpointing, is the bar every
  availability-aware claim must clear, and we show how easily the wrong baseline inflates a
  result.
  \item \textbf{A characterization (``decision boundary'')} of when reactive
  availability-aware adaptation helps (Sections~\ref{sec:experiments},
  \ref{sec:decision}). Across stationary, optimizer-state-loss, and distribution-drift
  regimes on CIFAR-10/ResNet-18 and a GPT-2/AdamW task, full-state checkpointing dominates;
  reactive adaptation matters only for non-persistable optimizer state under frequent short
  interruptions, recovering only $\approx$21\% of the state-loss penalty, and even that is high-variance.
\end{enumerate}

We present the negative core as the contribution: a practitioner deploying on
intermittent compute should know that in the common case checkpointing is enough, and
should know precisely the regime where it is not.

\section{Related work}
\label{sec:related}

\paragraph{Fault-tolerant and elastic training.} Checkpointing is the standard mechanism
for tolerating failures. CheckFreq~\citep{mohan2021checkfreq} and
DeepFreeze~\citep{nicolae2020deepfreeze} reduce checkpoint overhead;
ZeRO~\citep{rajbhandari2020zero} reduces the footprint that must be saved. Elastic
frameworks (TorchElastic, Bamboo~\citep{thorpe2023bamboo}, Oobleck~\citep{jang2023oobleck},
Varuna~\citep{athlur2022varuna}) reshape the worker set as nodes come and go. These target
\emph{node-wise}, unpredictable availability. We study \emph{time-wise}, predictable
availability, and our finding is precisely that for this case the standard mechanism
(checkpoint-resume) is already near-optimal when implemented well.

\paragraph{Asynchronous and stale-gradient methods.}
Hogwild!~\citep{recht2011hogwild}, asynchronous SGD~\citep{lian2015async}, and delay-compensated
variants~\citep{zheng2017asynchronous} tolerate staleness drawn from a distribution. Across a
scheduled gap, staleness is deterministic and known. As we show, under full-state
preservation there is no staleness to compensate for.

\paragraph{Low-communication and satellite-based distributed training.} Distributed
low-communication training of language models, e.g.\ DiLoCo~\citep{douillard2023diloco} and
its streaming/asynchronous successors, and the broad satellite federated-learning
literature~\citep{razmi2022ground, so2022fedspace, chen2022satellite}, including
\emph{energy-aware} scheduling over constellations~\citep{razmi2024energyaware}, address
intermittent participation directly. This is a mature, actively surveyed area; our intent is
not to claim novel distributed optimization but to provide a clean benchmark and a careful
baseline study for the single-learner-under-schedule case that underlies it.

\paragraph{Learning-rate scheduling and weight averaging.} Cosine annealing and warm
restarts~\citep{loshchilov2017sgdr}, cyclical rates~\citep{smith2017cyclical}, and weight
averaging (SWA)~\citep{izmailov2018swa} are the building blocks both our baseline (effective-time
schedule indexing) and the anticipative direction we leave open (Section~\ref{sec:proactive})
draw on.

\paragraph{Empirical reality checks.} Our methodology follows a line of work that re-evaluates
claimed advances against strong baselines and frequently finds them to evaporate: ``Are GANs
Created Equal?''~\citep{lucic2018gans}, ``Do ImageNet Classifiers Generalize to
ImageNet?''~\citep{recht2019imagenet}, and the metric-learning reality
check~\citep{musgrave2020reality}. We apply the same discipline to availability-aware training,
and release the benchmark and baseline so others can apply it to us.

\section{Problem formulation}
\label{sec:problem}

\subsection{Objective and optimizer}
We consider the minibatch objective $\min_{\theta} F(\theta)=\mathbb{E}_{(x,y)\sim\mathcal{D}}[\ell(\theta;x,y)]$
optimized by a momentum method,
\begin{equation}
  m_{k+1} = \mu\, m_k + g_k, \qquad \theta_{k+1} = \theta_k - \eta_k\, m_{k+1},
  \label{eq:sgd_mom}
\end{equation}
with stochastic gradient $g_k$, momentum $\mu\in[0,1)$, and learning rate $\eta_k$. The Adam case
applies the same reasoning to the first and second moments.

\begin{definition}[Availability schedule]
$A:[0,T]\to\{0,1\}$ with $A(t)=1$ iff compute is available at $t$, known for all $t$ at the start
of training. A \emph{session} is a maximal interval with $A\equiv 1$; a \emph{gap} a maximal
interval with $A\equiv 0$. Sessions/gaps alternate as $S_1,G_1,\dots,S_n$ with durations
$s_i,g_i$. Let $t_{\mathrm{eff}}(t)=\int_0^t A(s)\,ds$ be \emph{effective} (active) time.
\end{definition}

\subsection{Baselines, weak and strong}
\label{sec:baselines}
The result depends entirely on which baseline ``checkpoint-resume'' denotes.

\begin{itemize}[leftmargin=*, nosep]
  \item \textsc{Checkpoint} (weak): preserve $(\theta,m,\text{BN stats})$ across a gap but index
  the learning-rate schedule on \emph{wall-clock} time $t$. The schedule advances during idle
  gaps, over-annealing the learning rate at resumption. \emph{This is the strawman.}
  \item \textsc{Checkpoint-Strong} (the honest bar): preserve full optimizer state \emph{and}
  index the schedule on \emph{effective} time $t_{\mathrm{eff}}$. Nothing is lost and the
  schedule tracks the optimization trajectory.
\end{itemize}

\subsection{The two ways a gap can cost something}
\label{sec:costs}
Under \textsc{Checkpoint-Strong}, an idle gap by itself changes nothing
(Section~\ref{sec:analysis}). For a gap to be costly, one of two real conditions must hold:

\begin{description}[leftmargin=*]
  \item[\textbf{Cost A (non-persistable state).}] Persisting the optimizer state every gap may be
  infeasible: for SGD+momentum it is $1\times$ model size, for Adam $2\times$, and on a power- or
  downlink-constrained node the realistic choice is a \emph{weights-only} checkpoint, so the
  optimizer moments are lost on resume. The question becomes how much of a (cheaply cached)
  pre-gap moment to trust on resumption: $0\times$ (zero-and-warmup), $1\times$ (full preserve),
  or $\gamma^{g}\times$ (decayed).
  \item[\textbf{Cost B (distribution drift).}] If the data distribution shifts across the gap
  (different ground tracks, illumination, sensors), preserved momentum points in a now-wrong
  direction and preserved normalization statistics describe stale data.
\end{description}

\section{Methods evaluated}
\label{sec:methods}

We evaluate the following, all in one harness so the only differences are the listed knobs.
\textsc{Continuous} (train on the active set only; the upper bound), \textsc{Checkpoint} and
\textsc{Checkpoint-Strong} (Section~\ref{sec:baselines}), \textsc{Elastic} (zero the momentum at
each gap, warm up the learning rate over the first post-gap steps), and the
\emph{availability-aware} family below.

\paragraph{AAT (three reactive pillars).} At resumption after a gap of length $g$:
(1)~\emph{momentum rescale} $m\leftarrow\gamma^{g} m$ with $\gamma=\mu^{1/\tau}$, treating the gap
as $g/\tau$ steps of zero gradient; (2)~\emph{schedule-aware learning rate}, indexing the schedule
on $t_{\mathrm{eff}}$ (note this pillar alone \emph{is} \textsc{Checkpoint-Strong}); (3)~\emph{BN
warm restart}, $\beta_{\mathrm{BN}}\leftarrow\max(0.01,\,\beta_{\mathrm{BN}}^{\mathrm{base}}(1+g/s_{\mathrm{typical}}))$
so normalization statistics re-adapt faster. \textsc{AAT+warmup} adds a gentle post-gap warmup;
\textsc{AAT-drop} is the Cost-A variant that drops the second moment but reconstructs a decayed first
moment. The ablation axis lets us read each pillar's marginal contribution, and exposes the identity
\textsc{AAT}[\,lr\,]$\equiv$\textsc{Checkpoint-Strong}.

\paragraph{Proactive (foreknowledge prototype).} A reactive method acts only at resumption; the
schedule is known in advance, so one can also act \emph{before} a gap. Our prototype ramps the
learning rate down over the last $W$ seconds before a known gap (``landing'') and warm-restarts after.
This is the only mechanism a schedule-blind checkpoint library cannot replicate. We report it
as tested; the genuinely anticipative version (weight-EMA boundary consolidation, work allocation)
is left to future work (Section~\ref{sec:proactive}).

\section{Why a preserved gap is free}
\label{sec:analysis}

\begin{proposition}[Gaps are free under full-state preservation]
\label{prop:free}
Let \textsc{Checkpoint-Strong} preserve, across every gap, the complete state
$(\theta, m, \text{BN running statistics}, t_{\mathrm{eff}})$ and resume from it, indexing the
learning-rate schedule on $t_{\mathrm{eff}}$. Then the sequence of parameter iterates it produces over
the active set is identical to that of continuous training over the same active time, up to the
randomness of minibatch sampling order. In particular an availability gap injects no bias and no
excess loss.
\end{proposition}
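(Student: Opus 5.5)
The argument is an induction on the number of active optimization steps, comparing the two processes step by step. First I will fix the step index: let $k$ count optimizer steps taken while $A=1$. Continuous training over the same active time also takes steps indexed by $k$. Under \textsc{Checkpoint-Strong}, the learning rate at step $k$ is $\eta(t_{\mathrm{eff}})$. Since $t_{\mathrm{eff}}$ is constant on every gap, it is a function of $k$ alone, namely the active time elapsed before step $k$. So it coincides with the continuous run's $\eta_k$, and this is exactly where wall-clock indexing would fail. I will then couple the two runs by conditioning on the same minibatch sequence $(B_k)_k$, i.e.\ the same random seed and data order. This coupling is what ``up to the randomness of minibatch sampling order'' means.

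The core step is to view the whole training state as a deterministic update map. Write $z_k=(\theta_k,m_k,\text{BN stats}_k,t_{\mathrm{eff},k})$. One active step computes
\[
z_{k+1}=\Phi(z_k,B_k),
\]
where $\Phi$ computes $g_k=\nabla\ell(\theta_k;B_k)$ in train mode, updating the BN running statistics, then applies \eqref{eq:sgd_mom} with $\eta_k=\eta(t_{\mathrm{eff},k})$, and advances $t_{\mathrm{eff}}$. The Adam case is the same with the second moment added to $z$.

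The key observation is that during a gap no step is taken. By hypothesis the checkpoint restores $z$ exactly, so the gap acts as the identity map on $z$. The \textsc{Checkpoint-Strong} trajectory is therefore a composition of the maps $\Phi(\cdot,B_k)$ interleaved with identity maps. Induction on $k$, starting from the common initialization $z_0$, then gives $z_k^{\mathrm{CS}}=z_k^{\mathrm{cont}}$ for all $k$, in particular $\theta_k^{\mathrm{CS}}=\theta_k^{\mathrm{cont}}$.

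The claim about bias and loss follows once the coupling is removed. Without coupling, the minibatch sequences of the two runs are identically distributed, since sampling does not depend on $A$. The iterates are a fixed measurable function of that sequence, so they are equal in law. Hence $\mathbb{E}[F(\theta_K)]$ and every other trajectory statistic agree, which means zero bias and zero excess loss. Any realized difference comes only from which permutation of the data was drawn.

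The main obstacle is making precise that $\Phi$ has no hidden dependence on wall-clock time or on the gap. Such dependence could enter through a data loader whose epoch boundaries are tied to wall time, BN momentum tied to wall time, RNG state not included in the checkpoint, or nondeterministic kernels. I would handle this by stating explicitly that $z$ includes everything the step reads, namely the loader position and RNG state, or else by absorbing these into the ``sampling order'' caveat. I would also remark that \textsc{AAT}'s momentum rescale breaks the identity-on-gap property, which is why it cannot equal this baseline in general.
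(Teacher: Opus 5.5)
Your proposal is correct and follows the same route as the paper's sketch. The gap acts as the identity on the preserved state, effective-time indexing keeps $\eta_k$ aligned with the continuous run, so every active update is function-identical and only minibatch order can differ; your induction on $\Phi$ and the coupling/equal-in-law step just make this explicit. Keep your first remedy (loader position and RNG state in $z$, or i.i.d.\ sampling) rather than the ``absorb into the caveat'' fallback: the paper re-creates the data iterator each session, which changes the joint law of the minibatch sequence under without-replacement sampling, so the no-bias conclusion needs that remedy --- a point the paper's own sketch glosses over.
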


\begin{proof}[Proof sketch]
The update~\eqref{eq:sgd_mom} depends only on $(\theta_k, m_k, \eta_k)$ and the sampled gradient
$g_k$. A gap performs no update; it only advances wall-clock time. On resume, $(\theta,m)$ are
byte-identical to their pre-gap values and $\eta_k=\eta_{\mathrm{schedule}}(t_{\mathrm{eff}})$ is
unchanged because $t_{\mathrm{eff}}$ counts active time only. Hence each post-gap update is
function-identical to the corresponding continuous-training update. The sole difference is the order
in which minibatches are drawn (the data iterator is re-created per session), which perturbs the
realized gradient sequence but not the update rule or the expected trajectory.
\end{proof}

\begin{corollary}[Reactive repair is $\leq$ neutral on a preserved, stationary objective]
\label{cor:perturb}
Any resumption strategy that perturbs the preserved state (decaying momentum, re-warming BN, warming
up the learning rate) moves the iterate off the optimal reference of Proposition~\ref{prop:free}. On
a stationary objective its expected effect is therefore at most neutral, and generically slightly
negative.
\end{corollary}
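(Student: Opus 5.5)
The plan is to reduce Corollary~\ref{cor:perturb} to Proposition~\ref{prop:free} together with a local optimality assumption about the reference trajectory. I would make that assumption explicit, since the corollary as stated is only true relative to it.

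\textbf{Step 1: fix the comparison.} By Proposition~\ref{prop:free}, \textsc{Checkpoint-Strong} generates, in distribution over minibatch order, the same iterate process $\{\theta_k\}$ as \textsc{Continuous} on the same active time. Call this the reference process $\mathcal{R}$.

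Any resumption strategy $\mathcal{S}$ acts only at session boundaries. It applies a map $\Phi_i$ to the preserved state $z=(\theta,m,\text{BN stats},\eta)$ at the start of session $S_{i+1}$. Examples are $m\mapsto\gamma^{g_i}m$, the BN momentum change, or a warmup multiplier on $\eta$. Between boundaries it follows the same update~\eqref{eq:sgd_mom}.

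So $\mathcal{S}$ equals $\mathcal{R}$ with a finite number of state interventions, and $\mathcal{R}$ corresponds to every $\Phi_i=\mathrm{id}$. The second key point is stationarity: the data distribution is identical on both sides of every gap. A gap therefore carries no information that could make any $\Phi_i\neq\mathrm{id}$ systematically beneficial.

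\textbf{Step 2: the premise ``optimal reference''.} I would state as a standing assumption that the continuous run's hyperparameters $(\mu,\eta_{\mathrm{schedule}},\beta_{\mathrm{BN}})$ are tuned. Formally, the expected final loss $\mathbb{E}F(\theta_K)$, viewed as a function of perturbations of these knobs at any step, is stationary at the identity perturbation, i.e.\ has zero directional derivative there. This is what ``optimal reference'' must mean, and I would say plainly that the corollary is conditional on it.

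Under this assumption, I would parametrize each $\Phi_i$ by a scalar $\epsilon_i$ with $\Phi_i=\mathrm{id}$ at $\epsilon_i=0$. For instance, $1-\gamma^{g_i}$ for momentum decay, the depth of the warmup dip, or the increment in BN momentum. Then expand
\[
\mathbb{E}F(\theta_K^{\mathcal{S}})-\mathbb{E}F(\theta_K^{\mathcal{R}})
=\sum_i \epsilon_i\,\partial_{\epsilon_i}\mathbb{E}F\big|_{0}
+\tfrac12\,\epsilon^\top H\epsilon+o(\|\epsilon\|^2).
\]
Stationarity kills the first-order term. If the reference is a local minimum over this knob family, then $H\succeq 0$, which gives ``at most neutral''. ``Generically slightly negative'' follows because $H\succ 0$ in any direction that actually changes the trajectory.

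\textbf{Step 3: the main obstacle.} The hard part is justifying the zero first-order term and $H\succeq 0$ from something better than a tuning assumption.

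My first attempt would be a sharper argument for the momentum pillar in the quadratic or noisy-linear model. There, zeroing or decaying $m$ at an arbitrary step is equivalent to a transient change of effective learning rate, of size roughly $\eta/(1-\mu)$ times the decay. Under stationarity that transient has no preferred sign. The expected excess loss is then quadratic in the decay amount, via the standard bias–variance decomposition of SGD on a quadratic. I would treat the BN and warmup pillars the same way, as transient hyperparameter deviations.

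Beyond local analysis, the claim is only heuristic: a large perturbation could escape a poor basin. I would therefore label the corollary's conclusion as a local, in-expectation statement, and point to the experiments for the global behavior.
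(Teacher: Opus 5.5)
Your Steps 1--2 are the paper's argument made explicit, and they are correct as a conditional statement. The paper gives no separate proof, only the inference that a perturbation moves the iterate off the \emph{optimal reference} of Proposition~\ref{prop:free}. You are right that the proposition supplies equivalence with \textsc{Continuous}, not optimality; the paper simply posits \textsc{Continuous} as the upper bound.

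Your localization of that premise has two problems. First, it proves less than the statement. The strategies the corollary is about are $O(1)$ perturbations: full momentum zeroing in \textsc{Elastic}, a factor $\gamma^{10}\approx 0.35$ even for short gaps, and $\gamma^{g}\to 0$ at eclipse scale. A second-order expansion at $\epsilon=0$ says nothing there, whereas the paper's global premise yields \emph{at most neutral} for all of them at once, with no expansion needed. Second, the condition you give as the formal version of tuning is not implied by tuning. Tuning the global constants $\mu$, $\beta_{\mathrm{BN}}$ and the cosine shape makes $\mathbb{E}F$ stationary only under global changes of those constants. You assume stationarity under independent perturbations at every step, which is per-step optimality of the schedule and far stronger.

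The step that would genuinely fail is Step 3's attempt to derive the premise from stationarity. By Proposition~\ref{prop:free}, the resumed state is exactly the uninterrupted run's state at that effective step. So stationarity, and your remark that a gap ``carries no information,'' show only that $\Phi_i$ helps at a gap iff it would help the uninterrupted run at the same step. That holds iff the reference schedule is improvable by that intervention, which is the premise again.

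Nor does the transient lack a preferred sign. Decaying $m$ always shrinks the displacement along the accumulated step direction, so it is a one-sided cut in effective learning rate; indeed every pillar is one-sided, with $\epsilon_i\ge 0$. Its first-order effect is fixed by whether the local effective rate is above or below the loss-minimizing one. In the noisy quadratic model you invoke, a transient cut at the noise floor lowers the variance term and can strictly reduce the final loss. The linear term vanishes only under the per-step optimality you were trying to derive, so the argument is circular. The paper's Table~\ref{tab:stationary} is compatible with either sign: \textsc{Elastic} and \textsc{AAT+warmup}, both perturbing preserved state, sit nominally above \textsc{Checkpoint-Strong}, within noise.

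Finally, the BN pillar does not move $\theta_k$ at all, since train-mode BN normalizes with batch statistics. It acts only on the running statistics used at evaluation. So it must enter through the evaluated loss as a function of $(\theta_K,\text{running stats})$, not through $\mathbb{E}F(\theta_K)$ as in your expansion.
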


Proposition~\ref{prop:free} and Corollary~\ref{cor:perturb} are not deep, but they are the whole
story for the common case, and they predict the experiments below: under preservation, every method
collapses onto \textsc{Checkpoint-Strong}, and the AAT pillars net to noise. They also pinpoint where
a real win must come from: a regime that violates the preservation premise (Cost A) or the
stationarity premise (Cost B).

\section{The OrbitTrace benchmark}
\label{sec:benchmark}

OrbitTrace is 50 physics-grounded availability traces for reproducible evaluation in this regime
(Figure~\ref{fig:orbittrace}).

\paragraph{Generation.} Traces are produced by SGP4 propagation of published two-line element
sets~\citep{vallado2006revisiting, kelso2006validation} for satellites in three orbital regimes:
ISS-like (420\,km, 51.6\textdegree), sun-synchronous (700\,km, 97.4\textdegree), and low-inclination
(550\,km, 28.5\textdegree). For each, we compute solar-illumination windows with a cylindrical shadow
model~\citep{montenbruck2000satellite} and ground-station contacts for a 12-station catalog; a workload
profile maps combined power/communication availability to compute-active intervals. Generation is
deterministic: the same TLE yields the same trace.

\paragraph{Characteristics.} OrbitTrace differs structurally from the Bernoulli interruptions common in
fault-tolerance studies. Session durations are bimodal (8--15\,min communications-bound; 40--60\,min
compute-bound), gaps cluster tightly (eclipse $\approx$30\,min for ISS-like orbits), and consecutive
sessions and gaps are strongly length-correlated. This structure is exactly what an availability-aware
method would exploit if exploitation paid off, which makes OrbitTrace a fair test of the premise, not a
straw setting.

\begin{figure}[t]
\centering
\includegraphics[width=\linewidth]{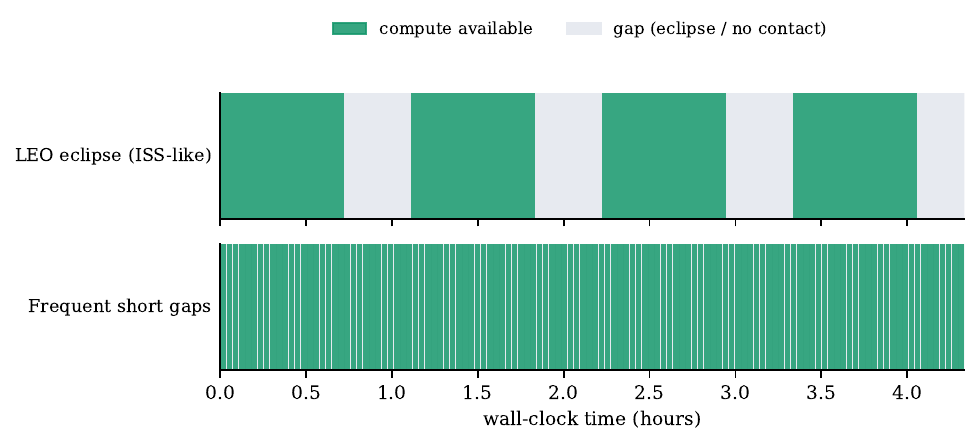}
\caption{Two representative \textsc{OrbitTrace} availability schedules. \emph{Top:} an ISS-like LEO-eclipse
regime with long compute sessions punctuated by regular $\approx$30-minute eclipse gaps. \emph{Bottom:} a
frequent-short-gap regime (thermal / contention / maneuver-scale interruptions). Green is compute-available,
grey is a gap; sessions and gaps are length-correlated, unlike the Bernoulli interruptions common in
fault-tolerance studies.}
\label{fig:orbittrace}
\end{figure}

\paragraph{Release.} OrbitTrace ships as a pip-installable package plus JSON trace files with per-trace
metadata (satellite id, regime, workload, full schedule).

\section{Experiments}
\label{sec:experiments}

\paragraph{Setup.} CIFAR-10 with ResNet-18~\citep{he2016resnet} (SGD, momentum 0.9, cosine schedule)
and a GPT-2-style Transformer fine-tuned on WikiText-2~\citep{merity2017pointer} with
AdamW~\citep{kingma2014adam}, the regime where dropping the second moment is genuinely costly. Runs are
on a single A100. Results below are mean $\pm$ standard deviation over \textbf{3 seeds} (CIFAR) and
\textbf{5 seeds} (GPT-2); the CIFAR per-seed spread is $\approx$0.2--0.6\,pt, so sub-half-point CIFAR
differences are noise. Full configurations and the harness are released. All numbers below are the
actual runs recorded on 2026-06-18.

\subsection{Stationary, full preservation: gaps are free, and the strawman quantified}
Table~\ref{tab:stationary} and Figure~\ref{fig:convergence} (CIFAR-10, \texttt{leo\_eclipse} schedule, full state preserved). The five
resumption strategies cluster within $\approx$0.6\,pt of one another, inside their per-seed standard
deviations (0.2--0.6\,pt): none has a defensible edge, and AAT's three pillars land essentially on
\textsc{Checkpoint-Strong} ($+0.05$\,pt). The weak wall-clock \textsc{Checkpoint} is lowest, the gap
the earlier draft mistook for an ``AAT win.'' The clean proof that a preserved gap is free is the GPT-2
result (Section~\ref{sec:gpt2}), where \textsc{Checkpoint-Strong} equals \textsc{Continuous}
\emph{exactly}; at CIFAR scale, re-creating the data iterator each session adds order noise that leaves
\textsc{Checkpoint-Strong} a fraction below \textsc{Continuous}.

\begin{table}[t]
\centering
\caption{Stationary CIFAR-10 / ResNet-18, \texttt{leo\_eclipse}, full state preserved. Final test
accuracy (\%), mean $\pm$ std over 3 seeds. $\Delta$ is versus \textsc{Checkpoint-Strong}. Every
resumption method is within noise of the others.}
\label{tab:stationary}
\small
\begin{tabular}{lcc}
\toprule
\textbf{Method} & \textbf{Accuracy} & \textbf{$\Delta$ vs.\ CP-Strong} \\
\midrule
\textsc{Continuous} (upper bound)            & 88.07 $\pm$ 0.11 & --- \\
\textsc{Checkpoint} (wall-clock LR, \emph{strawman}) & 87.01 $\pm$ 0.34 & $-0.33$ \\
\textbf{\textsc{Checkpoint-Strong}} (honest bar) & \textbf{87.34 $\pm$ 0.35} & --- \\
\textsc{Elastic} (zero + warmup)             & 87.59 $\pm$ 0.43 & $+0.25$ \\
\textsc{AAT} (3 pillars)                      & 87.39 $\pm$ 0.18 & $+0.05$ \\
\textsc{AAT+warmup}                           & 87.55 $\pm$ 0.64 & $+0.21$ \\
\textsc{Proactive} (foreknowledge prototype) & 87.22 $\pm$ 0.22 & $-0.12$ \\
\bottomrule
\end{tabular}
\end{table}

\begin{figure}[t]
\centering
\includegraphics[width=0.74\linewidth]{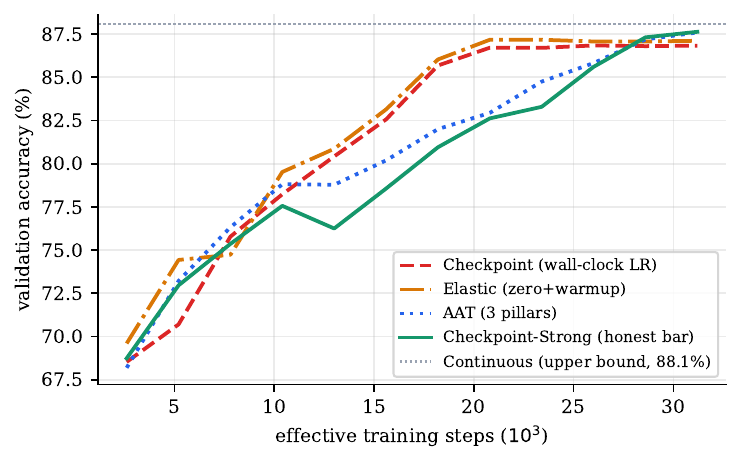}
\caption{Validation accuracy vs.\ effective training steps under \texttt{leo\_eclipse} (CIFAR-10 /
ResNet-18, seed 0). Under full-state preservation every resumption strategy tracks the same trajectory and
converges to within noise of the \textsc{Continuous} upper bound; the gap is essentially free. The weak
wall-clock \textsc{Checkpoint} trails slightly: that fraction is the entire ``AAT win'' the earlier draft
reported.}
\label{fig:convergence}
\end{figure}

\subsection{Pillar ablation: momentum and BN marginals are noise}
Table~\ref{tab:ablation} ablates the pillars. Since the LR pillar alone equals
\textsc{Checkpoint-Strong}, the table reads off the marginal value of the \emph{momentum} and \emph{BN}
pillars over the honest bar: both are net-negative and within noise (full AAT is $-0.45$\,pt below
LR-only), confirming Corollary~\ref{cor:perturb} on stationary data.

\begin{table}[t]
\centering
\caption{Pillar ablation, stationary CIFAR-10 / ResNet-18, mean $\pm$ std over 3 seeds. Baseline is
LR-only ($\equiv$ \textsc{Checkpoint-Strong}). Accuracy (\%).}
\label{tab:ablation}
\small
\begin{tabular}{lcc}
\toprule
\textbf{Pillars} & \textbf{Accuracy} & \textbf{$\Delta$ vs.\ LR-only} \\
\midrule
LR only ($\equiv$ \textsc{Checkpoint-Strong}) & 87.84 $\pm$ 0.19 & --- \\
LR + momentum                                  & 87.53 $\pm$ 0.32 & $-0.31$ \\
LR + BN                                         & 87.46 $\pm$ 0.26 & $-0.38$ \\
LR + momentum + BN (full \textsc{AAT})         & 87.39 $\pm$ 0.18 & $-0.45$ \\
\bottomrule
\end{tabular}
\end{table}

\subsection{Cost A (non-persistable state), CIFAR scale: a small edge, still below full preserve}
Table~\ref{tab:costa} drops the optimizer state at each gap (short gaps, $\gamma^{10}\!\approx\!0.35$).
Reconstructing a decayed momentum (\textsc{AAT+drop}, $85.63$) edges the trivial \textsc{Elastic}
($85.12$) by $\approx$0.5\,pt (a small positive for the momentum-rescale pillar) but stays below full
preservation ($86.07$), all within $\approx$0.3--0.5\,pt noise. At CIFAR/SGD scale the momentum buffer
is trivially persisted, so ``losing'' it is artificial; the large-model test next is where Cost A is
real.

\begin{table}[t]
\centering
\caption{Cost A: lightweight (weights-only) checkpoint, frequent short gaps, CIFAR-10 / ResNet-18, mean
$\pm$ std over 3 seeds. Accuracy (\%). $\Delta$ versus full preservation.}
\label{tab:costa}
\small
\begin{tabular}{lcc}
\toprule
\textbf{Method} & \textbf{Accuracy} & \textbf{$\Delta$ vs.\ full preserve} \\
\midrule
\textsc{Checkpoint-Strong} (full preserve)   & 86.07 $\pm$ 0.31 & --- \\
\textsc{Checkpoint}+drop (lose momentum)     & 85.50 $\pm$ 0.28 & $-0.57$ \\
\textsc{Elastic} (zero + warmup)             & 85.12 $\pm$ 0.11 & $-0.95$ \\
\textsc{AAT}+drop (rescale decayed moment)   & 85.63 $\pm$ 0.48 & $-0.44$ \\
\textsc{AAT} (full preserve, rescale)        & 85.82 $\pm$ 0.14 & $-0.25$ \\
\bottomrule
\end{tabular}
\end{table}

\subsection{Cost B (distribution drift): full preservation still wins}
Table~\ref{tab:costb} applies a deterministic per-session input shift and evaluates on the current
session's distribution. Even under drift, \textsc{Checkpoint-Strong} leads; the AAT pillars (including
the BN warm restart meant for exactly this case) and the proactive prototype are neutral-to-harmful at
this scale.

\begin{table}[t]
\centering
\caption{Cost B: per-session distribution shift, CIFAR-10 / ResNet-18, mean $\pm$ std over 3 seeds.
Accuracy (\%); $\Delta$ versus \textsc{Checkpoint-Strong} in parentheses.}
\label{tab:costb}
\small
\begin{tabular}{lcccc}
\toprule
\textbf{Shift} & \textbf{CP-Strong} & \textbf{AAT} & \textbf{Proactive} & \textbf{Elastic} \\
\midrule
bias    & 87.43 $\pm$ 0.30 & 87.46 ($+0.03$) & 87.22 ($-0.21$) & 87.92 ($+0.49$) \\
permute & 86.58 $\pm$ 0.54 & 86.34 ($-0.24$) & 86.54 ($-0.04$) & --- \\
\bottomrule
\end{tabular}
\end{table}

\subsection{The one regime that pays: large models, non-persistable state, frequent short gaps}
\label{sec:gpt2}
Table~\ref{tab:gpt2} and Figure~\ref{fig:gpt2} are the decisive test: GPT-2/AdamW on WikiText-2, where the Adam second moment is
expensive to persist. Three findings. (1)~\textsc{Checkpoint-Strong} equals \textsc{Continuous}
\emph{exactly} in both gap regimes. This is Proposition~\ref{prop:free} at language-model scale: checkpointing
is enough. (2)~Dropping AdamW state has a real cost ($\textsc{Elastic}$ $+0.76$ short, $+0.32$ long
perplexity), so here there is finally something to recover. (3)~\textsc{AAT-drop}, which reconstructs a
\emph{decayed} first moment instead of zeroing, beats \textsc{Elastic} on \emph{frequent-short} gaps on
average ($25.450$ vs.\ $25.609$), recovering $\approx$21\% of the state-loss penalty\footnote{Recovery
$=\frac{25.609-25.450}{25.609-24.863}=21.3\%$, against the \textsc{Continuous}/\textsc{Checkpoint-Strong}
ideal of $24.863$.}. But the effect is \emph{not robust}: \textsc{AAT-drop}'s variance ($\pm0.217$) is
$\approx$10$\times$ \textsc{Elastic}'s, and while 4 of 5 seeds improve, one regresses below
\textsc{Elastic}. (A single seed gave $38.8\%$; the 5-seed mean is roughly half that, a caution against
single-seed claims.) On \emph{long} (eclipse-scale) gaps, $\gamma^{g}\!\to\!0$, so the decayed moment is
indistinguishable from zero and \textsc{AAT-drop} equals \textsc{Elastic} exactly ($25.211$, 0\%
recovery). And in no regime does any method beat full preservation.

\begin{table}[t]
\centering
\caption{GPT-2 / AdamW on WikiText-2, 1500 steps, two gap regimes, mean $\pm$ std over 5 seeds.
Validation perplexity (lower is better). The lone positive signal (\textsc{AAT-drop}, short gaps) is
modest and high-variance.}
\label{tab:gpt2}
\small
\begin{tabular}{lcc}
\toprule
\textbf{Method} & \textbf{Short gaps} & \textbf{Long gaps} \\
\midrule
\textsc{Continuous}                                   & 24.863 $\pm$ 0.019 & --- \\
\textsc{Checkpoint-Strong} (preserve full Adam state) & \textbf{24.863 $\pm$ 0.019} & \textbf{24.863 $\pm$ 0.019} \\
\textsc{Elastic} (drop both moments + warmup)         & 25.609 $\pm$ 0.015 & 25.211 $\pm$ 0.030 \\
\textsc{AAT} (preserve + decay 1st moment)            & 24.978 $\pm$ 0.021 & 24.889 $\pm$ 0.019 \\
\textsc{AAT-drop} (drop 2nd; keep + decay 1st + warmup) & 25.450 $\pm$ 0.217 & 25.211 $\pm$ 0.030 \\
\bottomrule
\end{tabular}
\end{table}

\begin{figure}[t]
\centering
\includegraphics[width=0.86\linewidth]{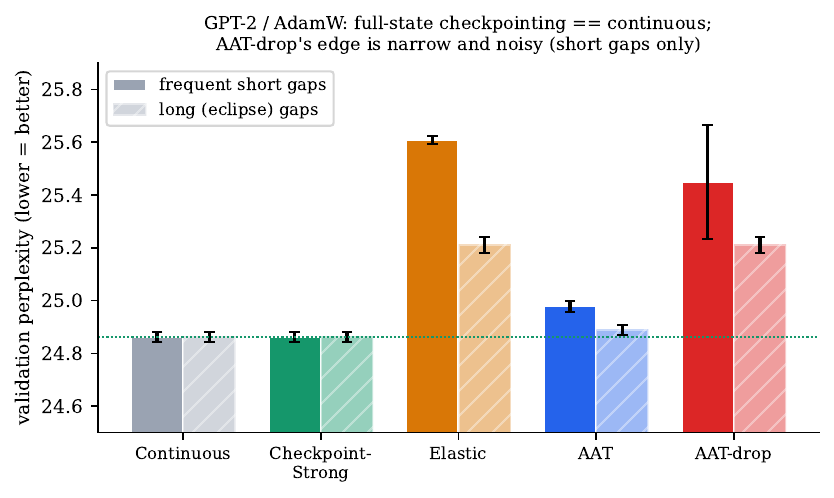}
\caption{GPT-2 / AdamW on WikiText-2 (mean $\pm$ std over 5 seeds), validation perplexity (lower is better).
\textsc{Checkpoint-Strong} equals \textsc{Continuous} exactly in both gap regimes (dotted line): checkpointing
is enough. Dropping AdamW state costs (\textsc{Elastic}, \textsc{AAT-drop}); reconstructing a decayed first
moment (\textsc{AAT-drop}) edges \textsc{Elastic} only on \emph{frequent short} gaps, and its large error bar
shows how noisy even that lone positive is. On long (eclipse-scale) gaps the decayed moment collapses to zero
and \textsc{AAT-drop}$\equiv$\textsc{Elastic}.}
\label{fig:gpt2}
\end{figure}

\section{The decision boundary}
\label{sec:decision}

Table~\ref{tab:decision} synthesizes the experiments into the practitioner-facing result: a map of
which strategy to use, and whether availability-aware adaptation is worth its complexity, as a function
of two binary conditions (can optimizer state be preserved, and are gaps short or eclipse-scale?) and
the stationarity of the data.

\begin{table}[t]
\centering
\caption{When is availability-aware adaptation worth it? Synthesis across regimes.}
\label{tab:decision}
\small
\begin{tabular}{llll}
\toprule
\textbf{State preservable?} & \textbf{Gaps} & \textbf{Best strategy} & \textbf{AAT worth it?} \\
\midrule
Yes (full)        & any           & \textsc{Checkpoint-Strong}        & No ($\leq$ neutral) \\
Yes (full)        & any (drift)   & \textsc{Checkpoint-Strong}        & No \\
No (weights-only) & frequent short & \textsc{AAT-drop} (decayed moment) & \textbf{Marginally ($\approx$21\%, noisy)} \\
No (weights-only) & eclipse-scale  & \textsc{Elastic} $\equiv$ \textsc{AAT-drop} & No (decay $\to 0$) \\
\bottomrule
\end{tabular}
\end{table}

The actionable summary: \emph{checkpoint full optimizer state and index your schedule on effective
time}; reach for availability-aware reconstruction only when you genuinely cannot persist optimizer
state (large models, constrained nodes) \emph{and} interruptions are frequent and short relative to the
momentum timescale (thermal throttling, contention, attitude maneuvers), not eclipse-scale outages.

\section{The anticipative direction we leave open}
\label{sec:proactive}

Schedule foreknowledge is real and only one of our methods uses it (\textsc{Proactive}), yet that
prototype did not help (Tables~\ref{tab:stationary},~\ref{tab:costb}): under full preservation there is
nothing to protect, so pre-gap learning-rate ``landing'' is a free perturbation. We do not
claim the foreknowledge angle works; we report only that the naive version does not. The version worth
testing (weight-EMA/SWA~\citep{izmailov2018swa} consolidation across a \emph{known} boundary, and
allocating interruption-sensitive work into predicted-long sessions) requires a regime where the gap is
costly (Cost A or B) for the anticipation to have anything to buy. Establishing that is future work, and
we flag it as an open hypothesis.

\section{Limitations}
\label{sec:limitations}

\textbf{Seed count.} Results are over 3 seeds (CIFAR) / 5 seeds (GPT-2) with error bars; the lone
positive (Cost-A short-gap recovery) is high-variance, so more GPT-2 seeds would tighten it further.
\textbf{Scale.} We test ResNet-18 and a small GPT-2; the
positive Cost-A regime should grow more relevant at larger scale (where optimizer state is genuinely
unpersistable), which strengthens rather than weakens the characterization but should be verified.
\textbf{Drift proxy.} Cost B uses a synthetic per-session input transform, not true on-orbit
distribution shift; a streaming/continual-learning benchmark evaluated on the live distribution is the
natural next test of the BN and momentum pillars. \textbf{Anticipative methods untested.}
Section~\ref{sec:proactive} describes a direction we have not yet validated.

\section{Conclusion}

Specialized availability-aware optimizers are often proposed for training under predictable interruption.
We tested the premise on a physics-grounded benchmark with a strong baseline and found that, when
optimizer state can be preserved, an availability gap is essentially free: competent checkpoint-and-resume
matches uninterrupted training, and reported wins for availability-aware methods trace to a weak
wall-clock baseline. Reactive adaptation helps only in a narrow, identifiable corner: large models with
non-persistable optimizer state under frequent short interruptions, where reconstructing a decayed moment
recovers $\approx$21\% of the state-loss penalty on average (and not reliably) and nothing more. We release OrbitTrace, the strong
baseline protocol, and this decision boundary so the next method is measured against the right bar.

\section*{Acknowledgments}
We thank the RotaStellar team for the CAE planning system from which OrbitTrace is derived. Compute was
provided by RunPod.

\bibliographystyle{plainnat}
\bibliography{references}

\end{document}